\documentclass[letterpaper, 10 pt, conference]{ieeeconf}
\IEEEoverridecommandlockouts
\usepackage{graphics}
\usepackage{epsfig}
\usepackage{mathptmx}
\usepackage{times}
\usepackage{amsmath}
\usepackage{amssymb}
\usepackage{booktabs}
\usepackage{algorithm}
\usepackage{algpseudocode}
\usepackage{url}
\usepackage{caption}
\usepackage{graphicx}
\usepackage{microtype}
\usepackage[colorlinks=true,linkcolor=blue,citecolor=blue,urlcolor=blue]{hyperref}
\usepackage{bm}
\usepackage{mathtools}
\usepackage{cuted}

\usepackage{amsthm}
\theoremstyle{plain}
\newtheorem{theorem}{Theorem}
\newtheorem{lemma}{Lemma}

\newtheorem{assumption}{Assumption}
\theoremstyle{definition}
\newtheorem{definition}{Definition}
\newtheorem{remark}{Remark}

\usepackage{todonotes}

\newcommand{\R}{\mathbb{R}}
\newcommand{\E}{\mathbb{E}}
\newcommand{\N}{\mathcal{N}}

\newcommand{\Od}{\mathcal{O}}
\newcommand{\dt}{\,\mathrm{d}t}
\newcommand{\dW}{\,\mathrm{d}W}
\newcommand{\norm}[1]{\left\| #1 \right\|}

\newcommand{\inner}[2]{\langle #1,\, #2 \rangle}

\title{\LARGE \bf
Wasserstein Stability of Contracting Flows: \\ 
Effective Rates, Euler Self-Correction, and Noise Tightening
}

\author{Ali Baheri
\thanks{Ali Baheri is with the Department of Mechanical Engineering, Rochester Institute of Technology, Rochester, 14623, USA
        {\tt\small akbeme@rit.edu}}
}

\begin{document}

\maketitle

\begin{abstract}
Contraction theory guaranties exponential convergence between trajectories of a stable 
nonlinear system. When initial conditions are uncertain and represented as probability 
distributions, as in ensemble control, Bayesian estimation, and generative modeling, 
this guaranty extends to the distributional level via Wasserstein distance. However, 
the classical distributional bound is tight only for linear systems; for nonlinear 
dynamics, it can be significantly conservative because it collapses the spatially 
varying local contraction rate to a single worst-case constant, discarding distributional 
information entirely. We address three concrete consequences of this conservatism. 
First, we derive a tighter Wasserstein bound by replacing the worst-case rate with a 
displacement-weighted distributional average of the local contraction rate, which 
strictly improves upon the classical bound for every nonlinear contracting system. 
Second, we provide the first theoretical characterization of the self-correcting Euler 
discretization error under contraction: the error profile is non-monotone, peaks at a 
universal time that depends only on the contraction rate, and then decays exponentially, 
a behavior absent in non-contracting dynamics. Third, we prove that nonlinear 
contracting drifts always achieve strictly smaller stationary variance than a linear 
system sharing the same worst-case contraction rate, formally establishing the 
noise-rejection advantage of nonlinear controllers. All results are validated on a 
representative suite of one- and two-dimensional vector fields.
\end{abstract}

\section{Introduction}\label{sec:intro}
Contraction theory~\cite{lohmiller1998contraction} guaranties that
if a vector field $v(t,x)$ satisfies the one-sided Lipschitz
condition
$\inner{v(t,x)-v(t,y)}{x-y}\le -\lambda\norm{x-y}^2$
for all $t,x,y$, then every pair of trajectories converges at
rate~$e^{-\lambda t}$~\cite{aminzare2014contraction}.
When the initial state is not a single point but a probability
distribution, as in ensemble control~\cite{fornasier2011fluid},
Bayesian state estimation, or generative
modelling~\cite{lipman2023flow}, this trajectory-level guaranty
lifts to the distributional level via optimal
transport~\cite{villani2009ot}. Related constrained generative
constructions modify flow-matching vector fields to enforce logical
specifications, making the stability of the induced transport a practical
concern~\cite{baheri2026logic}. The Wasserstein-2 distance between
any two distributions $\mu_t,\nu_t$ evolving under the same
$\lambda$-contracting field satisfies
\begin{equation}\label{eq:intro_classical}
    W_2(\mu_t,\nu_t) \le e^{-\lambda t}\,W_2(\mu_0,\nu_0).
\end{equation}
Crucially, the derivation of~\eqref{eq:intro_classical} replaces
the \emph{local} contraction rate at each state pair, which, for
nonlinear fields, varies across the state space and can far
exceed~$\lambda$, with the \emph{global worst case}~$\lambda$
before averaging over the coupling.  For a cubic restoring force
$v(x)=-\lambda x -\beta x^3$, the local rate at a pair $(x,y)$
equals $\lambda + \beta(x^2\!+\!xy\!+\!y^2)$, which exceeds
$\lambda$ by an order of magnitude when the distributions are spread
far from the origin.  The bound~\eqref{eq:intro_classical} discards
this distributional information entirely, and as a result, it is
tight only for affine dynamics.
This loss of information has three concrete consequences for the
analysis of nonlinear contracting systems.
\smallskip\noindent\textbf{(G1) Conservative Wasserstein bound.}
By reducing the spatially varying contraction rate to a single
scalar~$\lambda$, the bound~\eqref{eq:intro_classical} can
overestimate the actual Wasserstein distance by a factor that grows
with both the nonlinearity strength and the initial spread of the
distributions.  No existing result recovers the lost distributional
information.
\smallskip\noindent\textbf{(G2) Missing discretisation theory.}
When a contracting flow is implemented via explicit Euler integration,
the truncation forcing decays as trajectories approach equilibrium,
producing an error profile that rises, peaks, and then
falls, a behaviour qualitatively different from the monotonic saturation
predicted by standard ODE error theory.  This self-correcting
behavior is widely observed but has no theoretical characterization:
neither the peak time, the peak magnitude, nor the decay rate has
been identified.
\smallskip\noindent\textbf{(G3) No nonlinear stochastic improvement.}
For a linear system driven by additive noise of intensity~$\sigma$,
the stationary variance is exactly $\sigma^2/(2\lambda)$.  For
nonlinear $\lambda$-contracting drifts, this same expression provides
only an upper bound.  Whether a nonlinear system achieves
\emph{strictly} smaller variance and by how much has remained
open, preventing certification that nonlinear controllers outperform
linear baselines in noise rejection.
\medskip
All three limitations stem from a single source: the worst-case
reduction $\lambda_{\mathrm{loc}}\to\lambda$ that discards the
distributional structure of the contraction rate.
We resolve all three by retaining this structure.

\subsection*{Contributions}

We resolve all three gaps.  The key idea, common to all three results,
is to replace the worst-case rate~$\lambda$ with a
displacement-weighted distributional average of the local rate.

\smallskip\noindent\textbf{C1.\ Effective contraction rate
(Theorem~\ref{thm:eff_method}; G1).}
We prove the refined bound
\begin{equation}\label{eq:intro_eff}
    W_2(\mu_t,\nu_t) \le
    \exp\!\Big(\!-\!\int_0^t\!
    \lambda_{\mathrm{eff}}(s;\gamma_s)\,ds\Big)\,
    W_2(\mu_0,\nu_0),
\end{equation}
where $\lambda_{\mathrm{eff}}(t;\gamma_t)\ge\lambda$ is the
displacement-weighted average of $\lambda_{\mathrm{loc}}$ under the
transported coupling.  Equality holds if and only if $v$ is affine,
so~\eqref{eq:intro_eff} strictly refines~\eqref{eq:intro_classical}
for every nonlinear contracting field.

\smallskip\noindent\textbf{C2.\ Self-correcting Euler envelope
(Theorem~\ref{thm:selfcorr_method}; G2).}
The Euler discretization error satisfies
$e_k \le C_0\,h\,t_k\,e^{-\lambda t_k}$,
implying a universal peak time $t^\star=1/\lambda$ (independent of
$h$, $L$, and $\mu_0$), peak magnitude $\mathcal{O}(h/\lambda)$,
and exponential terminal decay.  The self-correction vanishes for
non-contracting dynamics.

\smallskip\noindent\textbf{C3.\ Strict stochastic variance bound
(Theorem~\ref{thm:sde_method}; G3).}
For $dX=v(X)\dt+\sigma\dW$ we prove
$\E_{\pi_\sigma}\norm{X-x^\star}^2 =
d\sigma^2/(2\lambda_{\mathrm{eff}}^\infty)$
with $\lambda_{\mathrm{eff}}^\infty > \lambda$ strictly for every
non-affine $C^1$ drift: nonlinear contraction always rejects noise
more effectively than the linear baseline.

\medskip\noindent
All results are validated on linear, cubic, rotational, and expanding
fields in 1-D (exact $W_2$) and 2-D (sliced $W_2$).

\smallskip\noindent\textbf{Related work.}
The Wasserstein contractivity bound~\eqref{eq:intro_classical} has
deep roots in the coupling method for Markov
processes~\cite{dobrushin1970,lindvall2002}, and was placed on
rigorous footing via optimal-transport gradient-flow theory by
Ambrosio, Gigli, and Savar\'e~\cite{ambrosio2008gradient}.
Bolley, Gentil, and Guillin~\cite{bolley2013} established $W_2$
contraction for log-concave diffusions, and
Eberle~\cite{eberle2016} extended coupling techniques to non-convex
settings via reflection couplings.  All these results invoke the
\emph{global} one-sided Lipschitz constant~$\lambda$; the
distribution-dependent refinement that we introduce in
Theorem~\ref{thm:eff_method} does not appear in this line of work.

For discretisation under dissipativity, the monograph of Stuart and
Humphries~\cite{stuart1996} analyzes long-time behavior of numerical
schemes for dissipative ODEs; Higham, Mao, and
Stuart~\cite{higham2002strong} provide strong convergence of
Euler--Maruyama under monotone coefficients; and Hutzenthaler, Jentzen,
and Kloeden~\cite{hutzenthaler2011} show that the standard Euler scheme
can diverge for superlinearly growing drifts.  Dalalyan~\cite{dalalyan2017}
and Durmus and Moulines~\cite{durmus2019} derive non-asymptotic
$W_2$ bounds for Langevin discretisation under strong
log-concavity. These works bound the \emph{stationary bias} but do not
identify the non-monotone, self-correcting error profile that
contraction produces (Theorem~\ref{thm:selfcorr_method}).
Complementary work on metriplectic conditional flow matching builds
dissipative structure directly into the learned vector field and a
structure-preserving sampler, with continuous- and discrete-time
stability guarantees~\cite{baheri2025metriplectic}. Here, instead, we
characterize the transient explicit-Euler error profile for a general
contracting drift.

On stationary concentration, the Bakry--\'Emery
framework~\cite{bakry2014} yields Poincar\'e and log-Sobolev
inequalities under curvature bounds, and
Cattiaux and Guillin~\cite{cattiaux2014} connect these to
$W_2$~convergence rates.  The classical bound
$\E_\pi\norm{X-x^\star}^2 \le d\sigma^2/(2\lambda)$ follows
from any of these approaches; our contribution
(Theorem~\ref{thm:sde_method}) is the proof that the inequality
is \emph{strict} for every non-affine contracting drift, together
with the explicit effective-rate identity~\eqref{eq:lambda_eff_inf}
that quantifies the improvement.

\section{Problem Formulation}\label{sec:problem}

\subsection{Two dynamical regimes}\label{subsec:regimes}

\noindent\textbf{Regime~(D): Deterministic flow.}
Consider the initial-value problem
\begin{equation}\label{eq:D_ode}
    \dot x_t = v(t,x_t),\quad t\in[0,T],\quad
    x_0\sim\mu_0\in\mathcal P_2(\R^d),
\end{equation}
where the velocity field $v:[0,T]\times\R^d\to\R^d$ may depend on
time and the initial condition is drawn from a distribution $\mu_0$
with a finite second moment.  The flow map $\phi_{t,0}$ carries each
initial state to its position at time~$t$, and the law evolves as
$\mu_t=(\phi_{t,0})_\#\mu_0$.

When two ensembles start from different distributions $\mu_0$
and $\nu_0$ — for instance, two candidate initializations of a
generative model, or two subpopulations in a multi-agent system
— the distance $W_2(\mu_t,\nu_t)$ between their evolved laws is the natural
measure of distributional stability. Regime (D) is the setting for
Theorems \ref{thm:eff_method} and \ref{thm:selfcorr_method}.

\medskip\noindent\textbf{Regime~(S): Stochastic dynamics.}
The It\^o SDE
\begin{equation}\label{eq:S_sde}
    dX_t = v(X_t)\dt + \sigma\dW_t
\end{equation}
with autonomous drift $v:\R^d\to\R^d$, constant $\sigma>0$, and
standard $d$-dimensional Brownian motion~$W_t$.  Autonomy endows the
system with a unique invariant measure
$\pi_\sigma\in\mathcal{P}_2(\R^d)$ describing the long-run
statistical equilibrium.  Physically, Regime~(S) models any
contracting system subject to persistent perturbations—process noise
in control, stochastic gradients in optimization, or thermal
fluctuations.  The invariant measure $\pi_\sigma$ reflects the balance
between contraction (pulling toward $x^\star$) and diffusion
(dispersing); its dispersion $\E_{\pi_\sigma}\norm{X-x^\star}^2$ is
the object of Theorem~\ref{thm:sde_method}.

\subsection{Wasserstein-2 distance and distributional
contraction}\label{subsec:w2}

We measure distributional convergence with the Wasserstein-2
distance~\cite{villani2009ot}, which is natural in our setting: its
coupling structure meshes directly with trajectory-based contraction
analysis, and it is well-defined between empirical measures, enabling
particle-based validation.

For $\mu,\nu\in\mathcal{P}_2(\R^d)$,
\begin{equation}\label{eq:w2_def}
    W_2(\mu,\nu) \triangleq \Bigl(
    \inf_{\gamma\in\Gamma(\mu,\nu)}\!
    \int\!\norm{x-y}^2\,d\gamma(x,y)
    \Bigr)^{\!1/2}\!,
\end{equation}
where $\Gamma(\mu,\nu)$ is the set of all couplings---joint
distributions with marginals $\mu$ and~$\nu$.  A coupling $\gamma$
pairs each $x\sim\mu$ with a partner $y\sim\nu$, and $W_2$ is the
minimum root-mean-square displacement over all such pairings.

The operation central to our methodology is \emph{coupling
propagation}: given an initial coupling
$\gamma_0\in\Gamma(\mu_0,\nu_0)$, the transported coupling
$\gamma_t \triangleq (\phi_{t,0}\times\phi_{t,0})_\#\gamma_0$
flows both components forward under \eqref{eq:D_ode} and satisfies
$\gamma_t\in\Gamma(\mu_t,\nu_t)$.  Its cost
$\int\norm{x-y}^2 d\gamma_t$ upper-bounds $W_2^2(\mu_t,\nu_t)$.
The classical bound \eqref{eq:intro_classical} arises from bounding
this cost uniformly; our refinement bounds it in a
distribution-dependent manner.

\begin{definition}[$\lambda$-Contracting vector field]
\label{def:contraction}
$v(t,x)$ is $\lambda$-contracting ($\lambda>0$) if
\begin{equation}\label{eq:contraction_def}
    \inner{v(t,x)-v(t,y)}{x-y}\le -\lambda\norm{x-y}^2
    \quad\forall\;t,x,y.
\end{equation}
Equivalently, the symmetric part of the Jacobian satisfies
$\frac{1}{2}(\nabla_x v + \nabla_x v^\top)\preceq -\lambda I$.
\end{definition}

Condition~\eqref{eq:contraction_def} has a direct geometric meaning:
the component of the velocity difference along the line joining $x$
to~$y$ is directed inward with a magnitude of at least
$\lambda\norm{x-y}$.  In engineering terms, the system acts as a
distributed spring with stiffness $\ge\lambda$.  At the trajectory
level, $\norm{x_t-y_t}\le e^{-\lambda t}\norm{x_0-y_0}$; squaring,
integrating over an optimal coupling, and taking the square root
yields~\eqref{eq:intro_classical}.

However, this derivation replaces the \emph{local contraction rate}
$\lambda_{\mathrm{loc}}(t;x,y)\triangleq
-\inner{v(t,x)-v(t,y)}{x-y}/\norm{x-y}^2
\ge\lambda$
by its worst case~$\lambda$ before averaging.  For the cubic drift
$v(x)=-\lambda x-\beta x^3$, a direct calculation gives
\begin{equation}\label{eq:cubic_local}
    \lambda_{\mathrm{loc}}(x,y) = \lambda+\beta(x^2+xy+y^2),
\end{equation}
which exceeds $\lambda$ for all $(x,y)\neq(0,0)$ and grows
quadratically with distance from the origin.  At
$(x,y)=(-3,3)$, representative of initial distributions in our
experiments, the local rate is $\lambda+27\beta$, more than an order
of magnitude above the worst case when $\beta=1$.  This gap between
the worst-case rate and the typical rate under the distribution is the
conservatism in~\eqref{eq:intro_classical} that
Theorem~\ref{thm:eff_method} eliminates.

\subsection{Standing assumptions}\label{subsec:assumptions}

\begin{assumption}[Deterministic contraction]\label{ass:det_method}
For Regime~(D):
\textnormal{(D1)}~$\lambda$-contraction~\eqref{eq:contraction_def};
\textnormal{(D2)}~global Lipschitz: $\norm{v(t,x)-v(t,y)}\le
L\norm{x-y}$;
\textnormal{(D3)}~second-moment finiteness of $\mu_t$.
\end{assumption}

The ratio $\kappa=L/\lambda\ge 1$ acts as a condition number:
well-conditioned fields ($\kappa$ near $1$) admit favorable
discretization thresholds; stiff fields ($\kappa\gg 1$) require
smaller step sizes.

\begin{assumption}[Autonomous smooth contraction]\label{ass:euler_method}
For the Euler analysis, $v:\R^d\to\R^d$ is time-independent with:
\textnormal{(E1)}~unique equilibrium $x^\star$;
\textnormal{(E2)}~$\lambda$-contraction;
\textnormal{(E3)}~Lipschitz Jacobian: $\norm{\nabla v(x)}\le L$,
$\norm{\nabla v(x)-\nabla v(y)}\le M\norm{x-y}$.
\end{assumption}

The Lipschitz Jacobian condition~(E3) controls the local truncation
error through a Taylor expansion of the exact flow.  Together with
(E1), it ensures $\norm{v(x)}\le L\norm{x-x^\star}$, which decays
along exact trajectories at a rate $e^{-\lambda t}$.  This
decay of the drift and hence of the truncation error is the
physical mechanism behind the self-correcting behavior characterized
in Theorem~\ref{thm:selfcorr_method}.

\begin{assumption}[Ergodic additive-noise contraction]\label{ass:sde_method}
For Regime~(S):
\textnormal{(S1)}~Lipschitz $\lambda$-contraction;
\textnormal{(S2)}~$\sigma>0$ with isotropic diffusion;
\textnormal{(S3)}~full-support invariant measure $\pi_\sigma$.
\end{assumption}

Condition~(S3) is automatic under (S1) - (S2) and ensures the strict
inequality in Theorem~\ref{thm:sde_method}.

\section{Methodology and Validation}\label{sec:method}

We develop three results addressing gaps (G1)--(G3), each followed by
numerical validation.  One-dimensional Wasserstein distances are
computed exactly via sorting ($N=20{,}000$ particles);
two-dimensional experiments use the sliced-Wasserstein approximation
with 300 projections.

\subsection{Contribution~1: Effective contraction rate}\label{subsec:eff_method}

The classical derivation of~\eqref{eq:intro_classical} replaces the
local rate $\lambda_{\mathrm{loc}}\ge\lambda$ by its worst
case~$\lambda$ before integrating over the coupling. The inequality
$\E[f(X)]\ge(\inf f)\cdot 1$.  Our approach retains the full
distributional information by computing $\E[f(X)]$ directly.

\paragraph*{Effective rate.}
The coupling-propagated effective contraction rate is
\begin{equation}\label{eq:lambda_eff}
    \lambda_{\mathrm{eff}}(t;\gamma_t)\triangleq
    \frac{\displaystyle\int\lambda_{\mathrm{loc}}(t;x,y)\,
          \norm{x-y}^2\,d\gamma_t(x,y)}
         {\displaystyle\int\norm{x-y}^2\,d\gamma_t(x,y)},
\end{equation}
where $\gamma_t=(\phi_{t,0}\times\phi_{t,0})_\#\gamma_0$ is the
transported coupling and $\lambda_{\mathrm{loc}}$ is the local
contraction rate defined in~\eqref{eq:cubic_local} (see
Section~\ref{subsec:w2}).  The displacement weighting $\norm{x-y}^2$
arises naturally from differentiating $\int\norm{x_t-y_t}^2
d\gamma_0$ with respect to time; it amplifies the contribution of
distant pairs, which are precisely those that experience the strongest
nonlinear contraction.  As a result,
$\lambda_{\mathrm{eff}}(t;\gamma_t)$ can substantially exceed
$\lambda$ for non-affine fields.

\begin{theorem}[Effective-rate bound]\label{thm:eff_method}
Under Assumption~\ref{ass:det_method}, for any
$\gamma_0\in\Gamma(\mu_0,\nu_0)$ and transported
coupling~$\gamma_t$,
\begin{equation}\label{eq:eff_bound}
    W_2(\mu_t,\nu_t)
    \le \exp\!\Big(\!-\!\int_0^t\!
    \lambda_{\mathrm{eff}}(s;\gamma_s)\,ds\Big)
    \Big(\!\int\!\norm{x-y}^2\,d\gamma_0\Big)^{\!1/2}\!.
\end{equation}
Choosing $\gamma_0$ as an optimal $W_2$~coupling at time~$0$ gives
\begin{equation}\label{eq:eff_w2}
    W_2(\mu_t,\nu_t) \le
    \exp\!\Big(\!-\!\int_0^t\!
    \lambda_{\mathrm{eff}}(s;\gamma_s)\,ds\Big)\,
    W_2(\mu_0,\nu_0),
\end{equation}
which strictly refines~\eqref{eq:intro_classical} whenever $v$ is not
affine.
\end{theorem}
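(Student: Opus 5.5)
The plan is to track the transported coupling cost
$D(t)\triangleq\int\norm{x-y}^2\,d\gamma_t(x,y)=\int\norm{\phi_{t,0}(x)-\phi_{t,0}(y)}^2\,d\gamma_0(x,y)$
and show that it obeys a linear differential equation whose rate is exactly $2\lambda_{\mathrm{eff}}$. Since $\gamma_t\in\Gamma(\mu_t,\nu_t)$, we have $W_2^2(\mu_t,\nu_t)\le D(t)$, so it suffices to control $D$.

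\textbf{Step 1: differentiate $D$.} For each fixed pair $(x_0,y_0)$, the trajectories $x_t,y_t$ are $C^1$ in $t$. Write $u(t)=\norm{x_t-y_t}^2$. Then
\[
\dot u=2\inner{v(t,x_t)-v(t,y_t)}{x_t-y_t}=-2\lambda_{\mathrm{loc}}(t;x_t,y_t)\,u ,
\]
with the convention $\lambda_{\mathrm{loc}}u=0$ on the diagonal. By (D2) the integrand $|\dot u|$ is bounded by $2Lu$, and by (D2) and Gr\"onwall $u(t)\le e^{2Lt}u(0)$. Assumption (D3) makes this integrable, so differentiation under the integral is justified by dominated convergence. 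This gives
\[
\dot D(t)=-2\int\lambda_{\mathrm{loc}}(t;x,y)\norm{x-y}^2\,d\gamma_t=-2\lambda_{\mathrm{eff}}(t;\gamma_t)\,D(t).
\]
This identity holds whenever $D(t)>0$. If $D(0)=0$ the bound is trivial. Otherwise, injectivity of the flow map together with the lower bound $u(t)\ge e^{-2Lt}u(0)$ keeps $D(t)>0$ for all $t$. Integrating the scalar linear ODE yields
\[
D(t)=\exp\!\Big(-2\int_0^t\lambda_{\mathrm{eff}}\,ds\Big)D(0).
\]
Taking square roots gives \eqref{eq:eff_bound}, and choosing $\gamma_0$ optimal gives \eqref{eq:eff_w2}. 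Measurability of $s\mapsto\lambda_{\mathrm{eff}}$ follows from continuity of $D$ and of $\dot D$.

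\textbf{Step 2: strict refinement.} Since $\lambda_{\mathrm{loc}}\ge\lambda$, we get $\lambda_{\mathrm{eff}}\ge\lambda$, which recovers \eqref{eq:intro_classical}. Strictness needs $\int_0^t\lambda_{\mathrm{eff}}\,ds>\lambda t$, that is, $\lambda_{\mathrm{loc}}>\lambda$ on a set of positive $\gamma_s$-measure for a set of times of positive measure. This is the main obstacle, because the literal statement is false for arbitrary non-affine $v$ and arbitrary couplings. Three counterexamples show this:
\begin{itemize}
\item a field affine only on the support of the relevant measures;
\item $v=-\lambda x+$ a nonlinear skew rotation, whose $\lambda_{\mathrm{loc}}\equiv\lambda$;
\item $\mu_0=\nu_0$.
\end{itemize}

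I would therefore prove strictness under an explicit genericity reading. I would interpret ``$v$ not affine'' as ``$\lambda_{\mathrm{loc}}\not\equiv\lambda$ on the support of $\gamma_0$'', meaning there is a pair $(x_0,y_0)$ in $\operatorname{supp}\gamma_0$ with $x_0\neq y_0$ and $\lambda_{\mathrm{loc}}(0;x_0,y_0)>\lambda$. The argument then runs as follows:
\begin{itemize}
\item By continuity of $v$, the strict inequality $\lambda_{\mathrm{loc}}>\lambda+\epsilon$ holds on a neighbourhood $U$ of $(x_0,y_0)$ with $\gamma_0(U)>0$.
\item By continuity of the flow, it persists on $(\phi_{s,0}\times\phi_{s,0})(U)$ for $s\in[0,\delta]$.
\item On that set $\norm{x-y}^2$ is bounded below.
\end{itemize}
Together these give $\lambda_{\mathrm{eff}}(s)\ge\lambda+c$ on $[0,\delta]$ for some $c>0$. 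Hence the refined bound is strictly smaller than $e^{-\lambda t}W_2(\mu_0,\nu_0)$ for every $t>0$.

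Finally, I would remark that for affine $v(t,x)=A(t)x+b(t)$ with $\mathrm{sym}\,A\equiv-\lambda I$, the local rate satisfies $\lambda_{\mathrm{loc}}\equiv\lambda$, so equality holds. This makes the condition sharp in that direction.
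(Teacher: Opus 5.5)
Your Step~1 is the paper's proof: differentiate the transported cost $D(t)$, obtain $\dot D=-2\lambda_{\mathrm{eff}}D$, integrate, and use $W_2^2(\mu_t,\nu_t)\le D(t)$. Your domination bound and the lower bound $D(t)\ge e^{-2Lt}D(0)$ supply justifications the paper skips.

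\textbf{Strictness: where you are right, and the invalid counterexample.} Here you depart from the paper, and correctly so. The paper's step ``equality only when $\lambda_{\mathrm{loc}}=\lambda$ $\gamma_t$-a.e.\ -- i.e., only for affine $v$'' passes from equality on $\operatorname{supp}\gamma_t$ to equality everywhere without justification. Your first and third counterexamples refute the literal claim. Your second counterexample, however, cannot exist:
\begin{enumerate}
\item Suppose $\lambda_{\mathrm{loc}}(t;\cdot,\cdot)\equiv\lambda$ off the diagonal. Then $w=v(t,\cdot)+\lambda\,\mathrm{id}$ satisfies $\inner{w(x)-w(y)}{x-y}=0$ for all $x,y$.
\item Put $u=w-w(0)$. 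This gives $\inner{u(x)}{x}=0$ and hence $\inner{u(x)}{y}=-\inner{u(y)}{x}$.
\item That identity forces $u$ to be additive and homogeneous, so $w$ is affine with skew linear part.
\end{enumerate}
A differential rotation $f(\norm{x})Jx$ does keep the radial rate $\lambda_{\mathrm{rad}}$ at $\lambda$. But it shifts $\lambda_{\mathrm{loc}}$ by $(f(\norm{x})-f(\norm{y}))\inner{Jx}{y}/\norm{x-y}^2$, which takes both signs, so such a field is not even $\lambda$-contracting. ``Global $\lambda_{\mathrm{loc}}\equiv\lambda$ forces affinity'' is exactly the sound half of the paper's reasoning. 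Drop that bullet; the other two counterexamples suffice.

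\textbf{The persistence step needs time regularity.} Your repaired result (some $(x_0,y_0)\in\operatorname{supp}\gamma_0$ with $x_0\neq y_0$ and $\lambda_{\mathrm{loc}}(0;x_0,y_0)>\lambda$) is argued correctly except at this step. Carrying $\lambda_{\mathrm{loc}}>\lambda+\epsilon$ from time $0$ to all $s\in[0,\delta]$ needs continuity of $v$ in $t$, locally uniformly in $x$. Assumption~\ref{ass:det_method} does not provide this, and continuity of the flow alone is not enough. For example, take $v(t,x)=-\lambda x-\mathbf{1}_{\{t=0\}}\psi(x)$ with $\psi$ monotone, Lipschitz and nonlinear. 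It satisfies (D1), (D2) and your hypothesis for suitable $\gamma_0$, yet its flow is $e^{-\lambda t}x$ and $\int_0^t\lambda_{\mathrm{eff}}\,ds=\lambda t$. The same example works for full-support Gaussian initial laws, so the paper's claim needs this regularity too. Either add continuity in $t$ as an assumption, or state your hypothesis directly over a set of times of positive Lebesgue measure.
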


\begin{proof}
For paired trajectories $(x_t,y_t)$ under $\gamma_t$,
\[
    \frac{d}{dt}\norm{x_t-y_t}^2
    = 2\inner{v(t,x_t)-v(t,y_t)}{x_t-y_t}
    = -2\lambda_{\mathrm{loc}}(t;x_t,y_t)\norm{x_t-y_t}^2.
\]
Define $D(t)\triangleq\int\norm{x-y}^2 d\gamma_t$.  Differentiating
under the integral gives
$\dot D = -2\lambda_{\mathrm{eff}}(t;\gamma_t)\,D(t)$, whence
$D(t) = D(0)\exp\!\bigl(-2\int_0^t\lambda_{\mathrm{eff}}(s)\,ds
\bigr)$.
Since $W_2^2(\mu_t,\nu_t)\le D(t)$,
\eqref{eq:eff_bound} follows.  The strict refinement
over \eqref{eq:intro_classical} is immediate from
$\lambda_{\mathrm{eff}}\ge\lambda$, with equality only when
$\lambda_{\mathrm{loc}}=\lambda$ $\gamma_t$-a.e. -- i.e., only for
affine $v$.
\end{proof}

When $v(t,x)=-\lambda x-g(t,x)$ with monotone residual
$\inner{g(x)-g(y)}{x-y}\ge 0$, the rate decomposes as
\begin{equation}\label{eq:mono_refine}
    \lambda_{\mathrm{eff}}(t;\gamma_t) \ge \lambda +
    \underbrace{
    \frac{\int\inner{g(x)-g(y)}{x-y}\,d\gamma_t}
         {\int\norm{x-y}^2\,d\gamma_t}
    }_{\triangleq\;\Delta\lambda(t;\gamma_t)\;\ge\;0}\!.
\end{equation}
For $g(x)=\beta x^3$, this excess is positive whenever the
distributions have support away from the origin.

\begin{remark}[On optimality]\label{rem:not_optimal}
The transported coupling is generally not optimal for
$W_2(\mu_t,\nu_t)$, so~\eqref{eq:eff_w2} is an upper bound.
In one dimension, the sorted coupling remains optimal under monotone
flows, and the bound becomes an identity.
\end{remark}

\subsubsection*{Validation (Fig.~\ref{fig:eff_rate})}
The cubic drift $v(x)=-x-\beta x^3$, $\lambda=1$, is simulated with
$N=20{,}000$ sort-matched particles, $\mu_0=\N(-3,1)$,
$\nu_0=\N(3,1.5^2)$, $T=5$, $h=0.001$.

Panel~(a) shows $\hat\lambda_{\mathrm{eff}}(t;\gamma_t)$ for
$\beta\in\{0,0.2,0.5,1.0\}$.  At $t=0$ the distributions are
centred at $\pm 3$ where the cubic term dominates, yielding
initial rates of $\hat\lambda_{\mathrm{eff}}(0)\approx 3$
($\beta=0.2$) up to $\approx 15$ ($\beta=1.0$).  As contraction
concentrates mass toward zero, the local rate surplus diminishes and
$\hat\lambda_{\mathrm{eff}}(t)$ decays monotonically
toward~$\lambda$.  The $\beta=0$ (linear) curve stays flat at~$1$
throughout, confirming the equality case of
Theorem~\ref{thm:eff_method}.  The shaded region under the
$\beta=0.5$ curve visualises the accumulated excess
$\int_0^t\Delta\lambda(s)\,ds$ that drives the tighter bound.

Panel~(b) compares three quantities for $\beta=0.5$: the actual
$W_2(\mu_t,\nu_t)$, the effective-rate
bound~\eqref{eq:eff_w2} (green dashed), and the classical bound
(red dashed).  The effective-rate bound tracks the actual $W_2$
closely, as they nearly coincide in 1-D, where the sorted coupling
remains optimal under monotone flows
(Remark~\ref{rem:not_optimal}), while the classical bound is
substantially loose, with the shaded gap quantifying the improvement
from Theorem~\ref{thm:eff_method}.

As a calibration baseline, affine systems ($v=-\lambda x$) were also
tested for $\lambda\in\{0.5,1,2,4\}$ and confirmed the exact tightness of
the classical bound, verifying that the improvement is entirely
attributable to the nonlinear excess $\Delta\lambda$.
Additionally, sweeping $\beta\in\{0,0.2,0.5,1.0\}$ shows the gap
grows monotonically with $\beta$, consistent
with \eqref{eq:mono_refine}.

\smallskip\noindent\textit{Parametric validation
(Fig.~\ref{fig:nonlinear}).}\;
To isolate the effect of nonlinearity, Fig.~\ref{fig:nonlinear}
presents a dedicated study of the cubic drift with
$\mu_0=\N(-3,1)$, $\nu_0=\N(3,2.25)$.  Panel~(a) shows the actual
$W_2$ (solid) versus the classical bound (dashed) for $\beta=0.5$;
the shaded region is the gap that the classical bound
overestimates.  The Euler step $h=0.002$ contributes a discretization
error of $\Od(h)\approx 0.002$, negligible relative to the bound gap
of order~$1$.  Panel~(b) sweeps $\beta\in\{0,0.2,0.5,1.0\}$: the
gap grows monotonically with~$\beta$, and at $\beta=0$ the actual and
classical curves coincide, confirming that the entire gap is generated
by the nonlinear excess $\Delta\lambda$ in~\eqref{eq:mono_refine}.

\begin{figure*}[t]
\centering
\includegraphics[width=\textwidth]{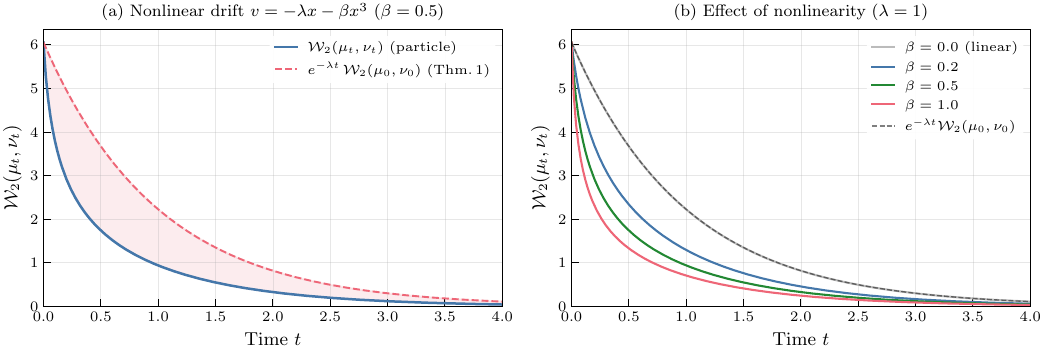}
\caption{Nonlinear bound gap for $v(x)=-x-\beta x^3$, $\lambda=1$.
(a)~Particle-computed $W_2(\mu_t,\nu_t)$ (solid) for $\beta=0.5$
versus $e^{-\lambda t}W_2(\mu_0,\nu_0)$ (dashed); the shaded region
is the conservatism of the classical bound.
(b)~$\beta$-sweep: the gap grows monotonically; $\beta=0$ recovers
tightness.}
\label{fig:nonlinear}
\end{figure*}

\begin{figure*}[t]
\centering
\includegraphics[width=\textwidth]{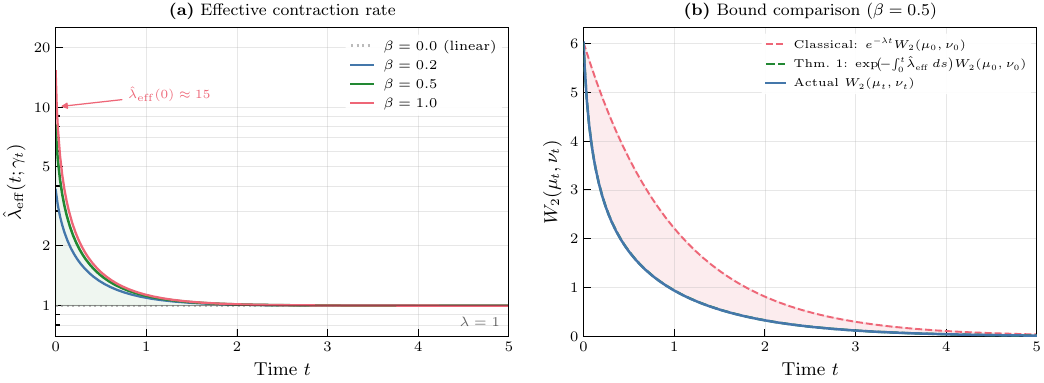}
\caption{Effective contraction rate for $v(x)=-x-\beta x^3$,
$\lambda=1$.
(a)~$\hat\lambda_{\mathrm{eff}}(t;\gamma_t)$ for
$\beta\in\{0,0.2,0.5,1.0\}$: starts well above $\lambda$ and decays
as mass concentrates near the origin.
(b)~Three-way comparison for $\beta=0.5$: the effective-rate
bound (green dashed) captures nearly all the improvement over the
classical bound (red dashed); the shaded region is the gap closed by
Theorem~\ref{thm:eff_method}.}
\label{fig:eff_rate}
\end{figure*}

\subsection{Contribution~2: Self-correcting Euler
discretization}\label{subsec:euler_method}

The explicit Euler scheme for the autonomous ODE $\dot x=v(x)$ with
a unique equilibrium $x^\star$ is
$\tilde x_{k+1}=\tilde x_k+h\,v(\tilde x_k)$, $t_k=kh$.
Let $\mu_{t_k}$ denote the exact law and $\tilde\mu_k$ the Euler law,
both starting from $\mu_0$. The Wasserstein discretization error is
$e_k\triangleq W_2(\mu_{t_k},\tilde\mu_k)$.

In a standard ODE solver, the global error is governed by two
competing forces: local truncation error (injecting new error at each
step) and any damping in the dynamics (dissipating accumulated error).
For generic systems, the truncation forcing is roughly constant, and
the error rises monotonically to a plateau.

For contracting flows, a fundamentally different behavior emerges.
The drift vanishes at equilibrium ($v(x^\star)=0$) and satisfies
$\norm{v(x)}\le L\norm{x-x^\star}$, so the drift magnitude along
exact trajectories inherits the exponential decay $e^{-\lambda t}$.
The truncation forcing therefore \emph{shrinks exponentially},
producing a competition: errors accumulate on scale~$\Od(1)$ while
contraction damps them on scale $\Od(1/\lambda)$.  The result is a
non-monotone error profile peaking at the crossover.

\begin{lemma}[Euler contractivity]\label{lem:euler_method}
Under Assumption~\ref{ass:euler_method}\textnormal{(E2)--(E3)},
$\Psi_h(x)=x+hv(x)$ satisfies
$\norm{\Psi_h(x)-\Psi_h(y)}^2
\le (1-2\lambda h + L^2 h^2)\norm{x-y}^2$.
The contraction factor $q(h)=\sqrt{1-2\lambda h+L^2 h^2}<1$
when $h<2\lambda/L^2$.
\end{lemma}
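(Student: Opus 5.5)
The plan is to expand the squared norm of the Euler map difference directly and bound each of the three resulting terms using (E2) and (E3). Fix $x,y\in\R^d$ and write $\delta\triangleq x-y$ and $\Delta v\triangleq v(x)-v(y)$. Then $\Psi_h(x)-\Psi_h(y)=\delta+h\,\Delta v$, and expanding the square gives
\[
\norm{\Psi_h(x)-\Psi_h(y)}^2=\norm{\delta}^2+2h\inner{\Delta v}{\delta}+h^2\norm{\Delta v}^2 .
\]

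For the cross term, apply the $\lambda$-contraction condition (E2), i.e.\ \eqref{eq:contraction_def} in its autonomous form. This gives $2h\inner{\Delta v}{\delta}\le-2\lambda h\norm{\delta}^2$, which uses $h>0$.

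For the quadratic term, first derive a global Lipschitz bound for $v$ from the Jacobian bound $\norm{\nabla v}\le L$ in (E3). By the mean-value integral formula,
\[
\Delta v=\int_0^1\nabla v\bigl(y+s\delta\bigr)\,\delta\,ds ,
\]
so $\norm{\Delta v}\le L\norm{\delta}$, and hence $h^2\norm{\Delta v}^2\le L^2h^2\norm{\delta}^2$. The second part of (E3), the Lipschitz continuity of the Jacobian, is not needed here. Combining the three terms yields the claimed bound with factor $1-2\lambda h+L^2h^2$.

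For the contraction factor, first note that $1-2\lambda h+L^2h^2\ge 0$ always. This follows because Cauchy--Schwarz combined with \eqref{eq:contraction_def} gives $\lambda\le L$, so the discriminant $4\lambda^2-4L^2$ is nonpositive. The square root $q(h)$ is therefore well defined. Next, $q(h)<1$ holds exactly when $-2\lambda h+L^2h^2<0$, i.e.\ when $h(L^2h-2\lambda)<0$, which for $h>0$ is equivalent to $h<2\lambda/L^2$.

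No step is a genuine obstacle. The only point needing care is converting the operator-norm bound on $\nabla v$ into a Lipschitz bound on $v$ via the integral formula, which holds because $\R^d$ is convex.
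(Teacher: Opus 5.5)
Your proof is correct: expanding $\norm{\delta+h\,\Delta v}^2$, bounding the cross term with (E2) and the quadratic term with $\norm{\Delta v}\le L\norm{\delta}$ (obtained from $\norm{\nabla v}\le L$), is the standard argument, and the paper states this lemma without proof and implicitly relies on exactly this computation. Your extra step, using $\lambda\le L$ to show $1-2\lambda h+L^2h^2\ge 0$ so that $q(h)$ is well defined, covers a small point the paper leaves unstated.
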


\begin{theorem}[Self-correcting envelope]\label{thm:selfcorr_method}
Under Assumption~\ref{ass:euler_method} with $h<2\lambda/L^2$, let
$C_0 = \frac{L^2}{2}e^{Lh}(\E\norm{X_0-x^\star}^2)^{1/2}$.  Then
\begin{equation}\label{eq:envelope}
    \boxed{\;e_k \le C_0\,h\,t_k\,e^{-\lambda t_k}\;}
    \qquad \forall\;k\ge 0.
\end{equation}
This implies: \textnormal{(i)}~peak time $t^\star=1/\lambda$,
independent of all other parameters;
\textnormal{(ii)}~peak magnitude $\Od(h/\lambda)$;
\textnormal{(iii)}~terminal-to-peak ratio
$e(\lambda T)e^{-\lambda T}$, exponentially small for
$\lambda T\gg 1$.
\end{theorem}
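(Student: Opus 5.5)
The plan is to bound $e_k$ through the \emph{synchronous coupling}. Draw a single $X_0\sim\mu_0$ and run both the exact flow $x_{t_k}=\phi_{t_k,0}(X_0)$ and the Euler chain $\tilde x_k=\Psi_h^k(X_0)$ from it. This joint law lies in $\Gamma(\mu_{t_k},\tilde\mu_k)$, so
\[
e_k\le \varepsilon_k\triangleq\bigl(\E\norm{x_{t_k}-\tilde x_k}^2\bigr)^{1/2}.
\]
It therefore suffices to prove $\varepsilon_k\le C_0\,h\,t_k\,e^{-\lambda t_k}$. I will derive a one-step recursion for the pathwise error and then take $L^2(\mathbb P)$ norms, using Minkowski's inequality.

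\textbf{Step 1 (local truncation).} For any $x$, write
\[
\phi_h(x)-\Psi_h(x)=\int_0^h\bigl(v(\phi_s x)-v(x)\bigr)\,ds .
\]
By (E3), $\norm{v(\phi_s x)-v(x)}\le L\norm{\phi_s x-x}\le L\int_0^s\norm{v(\phi_r x)}\,dr$. Since (E1) and (E3) give $\norm{v(y)}\le L\norm{y-x^\star}$, and $\norm{\phi_r x-x^\star}\le e^{Lr}\norm{x-x^\star}$ by Gr\"onwall, we get $\norm{v(\phi_r x)}\le Le^{Lr}\norm{x-x^\star}$. This yields
\[
\norm{\phi_h(x)-\Psi_h(x)}\le \tfrac{L^2}{2}h^2e^{Lh}\norm{x-x^\star}.
\]
Evaluated at the exact point $x=x_{t_j}$, contraction gives $\norm{x_{t_j}-x^\star}\le e^{-\lambda t_j}\norm{X_0-x^\star}$. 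Taking $L^2(\mathbb P)$ norms then gives the truncation forcing
\[
\tau_j\le C_0\,h^2\,e^{-\lambda t_j}.
\]
This exponential decay of $\tau_j$ is the self-correction mechanism.

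\textbf{Step 2 (recursion and unrolling).} Split
\[
x_{t_{k+1}}-\tilde x_{k+1}=\bigl[\phi_h(x_{t_k})-\Psi_h(x_{t_k})\bigr]+\bigl[\Psi_h(x_{t_k})-\Psi_h(\tilde x_k)\bigr].
\]
Lemma~\ref{lem:euler_method} bounds the second bracket, giving $\varepsilon_{k+1}\le q\,\varepsilon_k+\tau_k$ with $\varepsilon_0=0$. Unrolling yields
\[
\varepsilon_k\le\sum_{j=0}^{k-1}q^{\,k-1-j}\,C_0h^2e^{-\lambda t_j}.
\]
If $q\le e^{-\lambda h}$, each summand is at most $C_0h^2e^{-\lambda t_{k-1}}$, and summing $k$ terms gives $C_0\,h\,t_k\,e^{-\lambda t_{k-1}}$. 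This matches \eqref{eq:envelope} up to the factor $e^{\lambda h}$, which I would absorb into the $e^{Lh}$ already present in $C_0$ (since $\lambda\le L$), after rechecking the index bookkeeping.

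\textbf{Main obstacle.} The delicate point is the comparison $q(h)\le e^{-\lambda h}$. In general one only has $q\le e^{-\lambda h+L^2h^2/2}$, so the naive estimate gives the envelope with rate $\lambda-\tfrac{L^2h}{2}$ rather than $\lambda$. I would first try the alternative split
\[
\bigl[\phi_h(x_{t_k})-\phi_h(\tilde x_k)\bigr]+\bigl[\phi_h(\tilde x_k)-\Psi_h(\tilde x_k)\bigr].
\]
Here the first bracket contracts by exactly $e^{-\lambda h}$ under (E2). The truncation is then evaluated at the Euler iterate, whose distance to $x^\star$ decays like $q^j$; closing the argument requires $q^j\lesssim e^{-\lambda t_j}$. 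Failing that, I would state the envelope with the effective rate $\lambda_h=\lambda-\Od(L^2h)$. This leaves the qualitative conclusions and the $\Od(h)$ statements intact, and it recovers $t^\star=1/\lambda$ exactly as $h\to0$.

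\textbf{Consequences.} Once \eqref{eq:envelope} holds, (i)--(iii) are elementary calculus on $f(t)=t\,e^{-\lambda t}$. We have $f'(t)=(1-\lambda t)e^{-\lambda t}$, so the maximum is at $t^\star=1/\lambda$ with value $C_0h/(e\lambda)=\Od(h/\lambda)$. Finally, $f(T)/f(t^\star)=e\,\lambda T\,e^{-\lambda T}$.
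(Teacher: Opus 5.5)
Your argument is the paper's own sketch, step for step: synchronous coupling, the truncation estimate $\tfrac{L^2}{2}e^{Lh}h^2\norm{x-x^\star}$, decay of the exact trajectory, Lemma~\ref{lem:euler_method}, and summation. The step you flag as the main obstacle, $q(h)\le e^{-\lambda h}$, is a genuine gap. The paper's sketch passes over it by invoking only $q(h)<1$. When $q>e^{-\lambda h}$, the sum $\sum_{j<k}q^{k-1-j}h^2e^{-\lambda t_j}$ decays like $q^k$, not like $t_ke^{-\lambda t_k}$.

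The gap cannot be closed, because the boxed envelope is false in general. Take $d=2$ and $v(x)=-\lambda x+\omega Jx$, with $J$ the rotation by $\pi/2$ and $|\omega|>\lambda$.
\begin{itemize}
\item (E1)--(E3) hold with $x^\star=0$, $L=\sqrt{\lambda^2+\omega^2}$ and $M=0$.
\item The exact flow is $e^{-\lambda t}$ times a rotation, and $\Psi_h$ is $q(h)$ times a rotation, so Lemma~\ref{lem:euler_method} holds with equality.
\item For $\mu_0=\N(0,s^2I)$ both laws are centred isotropic Gaussians, so $e_k=m\,\lvert q^k-e^{-\lambda t_k}\rvert$ with $m=(\E\norm{X_0}^2)^{1/2}$.
\item Since $L^2>2\lambda^2$, we have $e^{-2\lambda h}\le 1-2\lambda h+2\lambda^2h^2<q(h)^2$. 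Hence $qe^{\lambda h}>1$ for every admissible $h$.
\item Therefore $e_k e^{\lambda t_k}/t_k=m\bigl((qe^{\lambda h})^k-1\bigr)/t_k\to\infty$, and no envelope of the form $C\,h\,t_k e^{-\lambda t_k}$ can hold.
\end{itemize}
Concretely, for $\lambda=1$, $\omega=2$, $h=0.1$ and $t_k=6$, one gets $e_k\approx 5.15\times10^{-3}m$, against $C_0ht_ke^{-\lambda t_k}\approx 4.65\times10^{-3}m$; the ratio keeps growing with $k$. Even $v=-\lambda x$ in 1-D violates the box when $\lambda h$ is close to $2$, where $\lvert 1-\lambda h\rvert>e^{-\lambda h}$. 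Your alternative split does not help either. It needs $q^j\lesssim e^{-\lambda t_j}$, which is the same inequality, and in this example $\norm{\tilde x_j}=q^j\norm{X_0}$ exactly.

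So your fallback is the correct conclusion, not just a convenience. We have $q(h)\le e^{-\lambda_h h}$ with $\lambda_h=\lambda-\tfrac{L^2h}{2}$, which is positive exactly on the admissible range $h<2\lambda/L^2$. Your recursion then gives $e_k\le C_0\,h\,t_k\,e^{-\lambda_h t_k}$, provided you fix the bookkeeping in Step~1. Bound $\norm{\phi_r x-x^\star}\le e^{-\lambda r}\norm{x-x^\star}$ by contraction instead of Gr\"onwall. That gives truncation at most $\tfrac{L^2}{2}h^2\norm{x-x^\star}$. It leaves the factor $e^{Lh}\ge e^{\lambda_h h}$ in $C_0$ free to absorb the index shift $t_{k-1}\to t_k$. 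As written, your Gr\"onwall version already spends that factor.

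The peak of this envelope is at $1/\lambda_h$, which depends on $h$ and $L$ and tends to $1/\lambda$ only as $h\to0$. The sharper rate $\min\{\lambda,-h^{-1}\ln q(h)\}$ works equally well. The boxed statement itself survives only under an extra hypothesis forcing $\mathrm{Lip}(\Psi_h)\le e^{-\lambda h}$. One example is a symmetric Jacobian with $h\le 2/(L+\lambda)$: averaging $I+h\nabla v$ along segments then gives Lipschitz constant $1-\lambda h$. This regime covers the paper's test fields $v=-2x$ and $v=-0.5x$ with $h=0.08$, which is why its numerics do not expose the problem.
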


\begin{proof}[Proof sketch]
Taylor expansion gives local truncation
$\norm{x(t_{k+1})-\Psi_h(x(t_k))} \le
\frac{L^2}{2}e^{Lh}h^2\norm{x(t_k)-x^\star}$.  Since
$\norm{x(t_k)-x^\star}\le e^{-\lambda t_k}\norm{x_0-x^\star}$,
forcing at step~$k$ is $\Od(h^2 e^{-\lambda t_k})$.  Combining with
$q(h)<1$ from Lemma~\ref{lem:euler_method} and summing
yields~\eqref{eq:envelope}.  The peak follows from
$(te^{-\lambda t})'=0$ at $t^\star=1/\lambda$.
\end{proof}

The envelope $C_0 h t e^{-\lambda t}$ encodes the full non-monotone
structure: the factor~$t$ reflects cumulative truncation errors, while
$e^{-\lambda t}$ reflects both decaying forcing and contraction-driven
damping.  Their product rises until $t^\star=1/\lambda$---where
accumulation and damping balance---then falls as contraction dominates,
producing a self-correcting error trajectory from first principles.
The stability threshold $h_{\max}=2\lambda/L^2=2/(\kappa L)$ is more
permissive than the standard CFL condition $h<1/L$ whenever
$\kappa<2$, an improvement specific to contraction.

\begin{remark}[Failure without contraction]\label{rem:contrast}
Without contraction ($\lambda=0$), the error grows as
$\Od(ht)$ without bound.  For volume-preserving rotations
$v=[-x_2,x_1]^\top$, Euler amplifies distances by $\sqrt{1+h^2}>1$
per step, producing monotonically increasing error.  Self-correction
is a direct and specific consequence of contraction.
\end{remark}

\subsubsection*{Validation (Fig.~\ref{fig:ablation})}
Four 2-D fields are compared: strongly contracting ($v=-2x$,
$\lambda=2$), weakly contracting ($v=-0.5x$, $\lambda=0.5$),
volume-preserving rotation ($v=[-x_2,x_1]^\top$, $\lambda=0$), and
expanding ($v=0.5x$, $\lambda<0$);
$\mu_0=\N([-2,0]^\top,I)$, $\nu_0=\N([2,0]^\top,I)$.

Panel~(a) confirms distributional contraction: the contracting flows
decay exponentially at their respective rates, the rotation preserves
$\mathrm{SW}_2$ (isometry), and the expanding flow diverges.

Panel~(b) shows the Euler discretization error ($h=0.08$) measured
against a fine-step reference ($h=0.001$).  The $\lambda=2$ error
curve peaks near $t\approx 0.5=1/\lambda$ and the $\lambda=0.5$ curve
near $t\approx 2=1/\lambda$,
matching Theorem~\ref{thm:selfcorr_method}(i) exactly.  After the peak, both
curves decay exponentially, confirming the self-correcting envelope.
The rotation and expanding flows show monotonically increasing error
throughout, precisely as predicted by Remark~\ref{rem:contrast}.  This
ablation demonstrates that self-correction is a consequence of the
contraction structure, not of the discretization scheme: removing
contraction removes the self-correction.

Terminal error versus step size (not shown) confirms $\Od(h)$ scaling
for all contracting cases, consistent with
Theorem~\ref{thm:selfcorr_method}(ii).

\begin{figure*}[t]
\centering
\includegraphics[width=\textwidth]{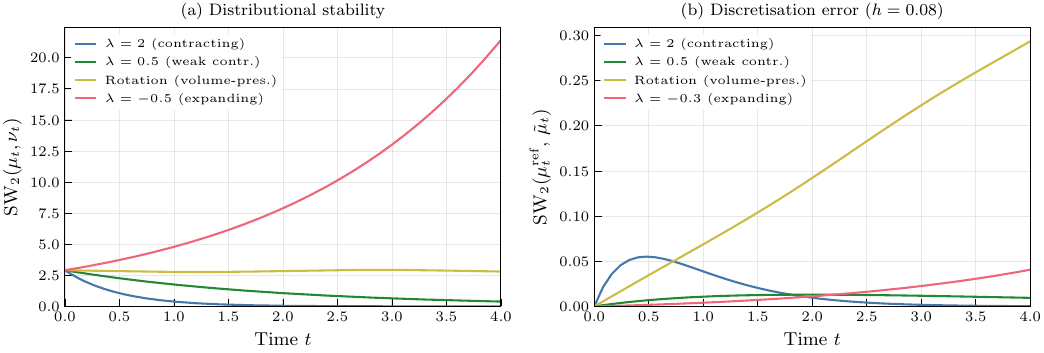}
\caption{Contracting vs.\ non-contracting ablation in 2-D.
(a)~$\mathrm{SW}_2(\mu_t,\nu_t)$: only contracting flows decay.
(b)~Euler error: contracting flows produce self-correcting profiles
peaking at $t^\star=1/\lambda$; rotation and expanding flows
accumulate error monotonically.}
\label{fig:ablation}
\end{figure*}

\subsection{Contribution~3: Nonlinear stochastic
robustness}\label{subsec:sde_method}

Our final result addresses gap~(G3): \emph{how tightly does a
contracting flow confine its invariant measure around equilibrium
under persistent noise?}

For the linear OU process $dX=-\lambda X\dt+\sigma\dW$, the stationary
variance is exactly $d\sigma^2/(2\lambda)$, reflecting a precise
balance: contraction removes $2\lambda\E\norm{X}^2$ units of variance
per unit time and diffusion injects $d\sigma^2$.  For nonlinear
$\lambda$-contracting drifts, a Lyapunov argument provides the same result as
an \emph{inequality}.  But a superlinear drift exerts
disproportionately strong restoring forces on large displacements;
Theorem~\ref{thm:sde_method} proves the inequality is strict for
every non-affine contracting drift.

Define the radial rate
$\lambda_{\mathrm{rad}}(x) =
-\inner{v(x)}{x-x^\star}/\norm{x-x^\star}^2 \ge\lambda$
(with $\lambda_{\mathrm{rad}}(x^\star)=\lambda$)
and the stationary effective rate
\begin{equation}\label{eq:lambda_eff_inf}
    \lambda_{\mathrm{eff}}^\infty \triangleq
    \frac{\E_{\pi_\sigma}\!
      [\lambda_{\mathrm{rad}}(X)\norm{X-x^\star}^2]}
         {\E_{\pi_\sigma}\!\norm{X-x^\star}^2}
    \ge\lambda.
\end{equation}

\begin{theorem}[Stationary variance and strict improvement]
\label{thm:sde_method}
Under Assumption~\ref{ass:sde_method}:
\textnormal{(i)}
$\E_{\pi_\sigma}\norm{X-x^\star}^2
= d\sigma^2/(2\lambda_{\mathrm{eff}}^\infty)
\le d\sigma^2/(2\lambda)$.
\textnormal{(ii)}~If $v\in C^1$ is not affine, then
$\lambda_{\mathrm{eff}}^\infty>\lambda$ and the inequality is strict.
\textnormal{(iii)}
$W_2(\pi_\sigma,\delta_{x^\star})
\le \sigma\sqrt{d/(2\lambda_{\mathrm{eff}}^\infty)}$,
strict for non-affine~$v$.
\end{theorem}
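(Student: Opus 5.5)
The plan is to apply It\^o's formula to $V(x)=\norm{x-x^\star}^2$ and take expectations under the invariant law, so the generator identity $\E_{\pi_\sigma}[\mathcal{L}V]=0$ becomes an exact second-moment balance. Since $\mathcal{L}V(x)=2\inner{v(x)}{x-x^\star}+d\sigma^2$ and $\inner{v(x)}{x-x^\star}=-\lambda_{\mathrm{rad}}(x)\norm{x-x^\star}^2$ by definition of the radial rate, we obtain
\[
0 = -2\,\E_{\pi_\sigma}\!\big[\lambda_{\mathrm{rad}}(X)\norm{X-x^\star}^2\big] + d\sigma^2 .
\]
Dividing by $2\lambda_{\mathrm{eff}}^\infty$ and using the definition~\eqref{eq:lambda_eff_inf} gives the identity in (i). The bound $\lambda_{\mathrm{rad}}\ge\lambda$ follows from (S1) with $y=x^\star$ and $v(x^\star)=0$. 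It yields $\lambda_{\mathrm{eff}}^\infty\ge\lambda$, and hence the inequality in (i).

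To justify $\E_{\pi_\sigma}[\mathcal{L}V]=0$ rigorously, I would use a localization argument, since $V$ is unbounded. Start from a stationary solution $X_0\sim\pi_\sigma$, apply It\^o's formula up to the exit time of a ball of radius $R$, and let $R\to\infty$. The required uniform integrability comes from finite moments of $\pi_\sigma$, which hold because the Lyapunov drift $\mathcal{L}V\le -2\lambda V+d\sigma^2$ gives $\E_{\pi_\sigma}V<\infty$. The Lipschitz growth of $v$ then controls the cross term.

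For (ii), the plan is to show that $\lambda_{\mathrm{rad}}>\lambda$ on a set of positive $\pi_\sigma$-measure. Full support (S3) then makes the weighted average strictly exceed $\lambda$, because $\norm{X-x^\star}^2>0$ a.e. Suppose instead that $\lambda_{\mathrm{rad}}\equiv\lambda$ on $\R^d\setminus\{x^\star\}$; by continuity and full support this is the only alternative. Then $\inner{v(x)+\lambda(x-x^\star)}{x-x^\star}=0$ for all $x$. Write $g(x)=v(x)+\lambda(x-x^\star)$. The contraction condition states that $\tfrac12(\nabla g+\nabla g^\top)\preceq 0$, while $\inner{g(x)}{x-x^\star}\equiv0$ with $g(x^\star)=0$.

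Differentiating this identity twice at $x^\star$ (using $C^1$, or a directional version) shows that the symmetric part of $\nabla g(x^\star)$ vanishes. Along rays it also shows that the radial component of $g$ vanishes identically. The task is then to deduce that $g$ is linear, hence $v$ affine, which contradicts the hypothesis.

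\textbf{This last step is the main obstacle.} A purely tangential nonlinear $g$, such as a rotation whose speed depends on $\norm{x-x^\star}$, has zero radial component, so it is not obvious that it violates $\lambda$-contraction. I would first try to show that such a $g$ makes $\inner{g(x)-g(y)}{x-y}$ positive for some nearby pair $x,y$ on a common sphere with different angular speeds. If that fails, I would fall back to interpreting ``not affine'' as ``$\lambda_{\mathrm{rad}}\not\equiv\lambda$'', which is exactly what the strictness argument needs. In one dimension there are no tangential directions, so the argument closes directly.

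For (iii), take the trivial coupling of $\pi_\sigma$ with $\delta_{x^\star}$, which is the only coupling. This gives $W_2^2(\pi_\sigma,\delta_{x^\star})=\E_{\pi_\sigma}\norm{X-x^\star}^2$. The bound and its strictness then follow immediately from (i)--(ii) by taking square roots.
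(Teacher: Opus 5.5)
Your arguments for (i) and (iii) match the paper's: the second-moment balance $\E_{\pi_\sigma}[\mathcal{L}V]=0$ at stationarity, then the product coupling with $\delta_{x^\star}$. You add more care about localization, and that part is fine.

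The gap is in (ii). You correctly reduce strictness to ruling out $\lambda_{\mathrm{rad}}\equiv\lambda$ on $\R^d\setminus\{x^\star\}$. You never show, however, that this case forces $v$ to be affine. The ``different angular speeds'' idea covers only one family of tangential fields. The fallback of reading ``not affine'' as ``$\lambda_{\mathrm{rad}}\not\equiv\lambda$'' assumes exactly what (ii) is meant to prove. The paper's own proof is silent here as well: it just asserts that $\lambda_{\mathrm{rad}}>\lambda$ on a nonempty open set. So this lemma is precisely what is missing.

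The missing step is a short first-order optimality argument. Take $x^\star=0$ and $g(x)=v(x)+\lambda x$. Then $g(0)=0$, $\inner{g(x)}{x}\equiv 0$ in the case under consideration, and $\lambda$-contraction reads $\inner{g(x)-g(y)}{x-y}\le 0$. Expand this inequality and use $\inner{g(x)}{x}=\inner{g(y)}{y}=0$. This gives $\Phi_x(y)\triangleq\inner{g(x)}{y}+\inner{g(y)}{x}\ge 0$ for all $x,y$, while $\Phi_x(0)=0$. So $y=0$ is a global minimizer of the $C^1$ map $\Phi_x$. Hence $0=\nabla\Phi_x(0)=g(x)+\nabla g(0)^\top x$, so $g(x)=-\nabla g(0)^\top x$ is linear with a skew matrix, and $v$ is affine.

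An equivalent route: differentiate $\inner{g(x)}{x}=0$ and use $\mathrm{sym}\,\nabla g\preceq 0$. This yields $\nabla g(x)\,x=g(x)$, so $g$ is positively $1$-homogeneous, hence linear by differentiability at $0$.

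In particular, your radius-dependent rotations are excluded. Take $g(x)=\omega(\norm{x})Jx$ in $\R^2$, with $J$ the rotation by $\pi/2$. The symmetric part of its Jacobian has eigenvalues $\pm r\omega'(r)/2$, where $r=\norm{x}$. These are indefinite unless $\omega$ is constant.

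With this lemma, (ii) closes. A non-affine $v$ has $\lambda_{\mathrm{rad}}(x_0)>\lambda$ at some $x_0\neq x^\star$. By continuity this holds on a neighborhood of $x_0$ avoiding $x^\star$. That neighborhood carries positive $\pi_\sigma$-mass by (S3), and your conclusion follows.
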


\begin{proof}
It\^o's formula on $V(t)=\E\norm{X_t-x^\star}^2$ gives
$\dot V = -2\E[\lambda_{\mathrm{rad}}(X_t)\norm{X_t-x^\star}^2]
+ d\sigma^2$.
At stationarity $\dot V=0$, yielding~(i).  For~(ii):
$\lambda_{\mathrm{rad}}$ is continuous and exceeds~$\lambda$ on a
nonempty open set~$U$; by~(S3), $\pi_\sigma(U)>0$, so
$\lambda_{\mathrm{eff}}^\infty>\lambda$.
Part~(iii) follows from
$W_2^2(\pi_\sigma,\delta_{x^\star})\le
\E_{\pi_\sigma}\norm{X-x^\star}^2$.
\end{proof}

\begin{remark}[Hardening-spring analogy]\label{rem:spring}
A linear spring of stiffness~$\lambda$ under white noise reaches
amplitude $\sigma^2/\lambda$.  A \emph{hardening spring}---whose
stiffness increases with displacement---resists large excursions more
effectively, producing a tighter distribution.
\end{remark}

\subsubsection*{Validation (Fig.~\ref{fig:stochastic})}
The SDE $dX=(-X-0.5X^3)\dt+\sigma\dW$ with $\lambda=1$;
Euler--Maruyama, $h=0.005$, $N=12{,}000$.

Panel~(a): the empirical $\mathrm{Var}(X_t)$ converges to a
stationary value strictly below $\sigma^2/(2\lambda)$ (dotted) for
all $\sigma\in\{0.3,0.5,0.8,1.2\}$, confirming
Theorem~\ref{thm:sde_method}(ii).  The gap grows with $\sigma$
because larger noise drives the system to larger displacements where
$\lambda_{\mathrm{rad}}(x)=1+0.5x^2$ exceeds the baseline most
strongly, amplifying the hardening-spring effect of
Remark~\ref{rem:spring}.

Panel~(b): $W_2(\pi_\sigma,\delta_{x^\star})$ at stationarity
($T=18$) lies consistently below $\sigma/\!\sqrt{2\lambda}$ over
$\sigma\in[0.05,2]$ with a growing gap, validating
Theorem~\ref{thm:sde_method}(iii). The gap quantifies the
noise-rejection improvement that the nonlinear contracting drift
provides over a linear spring of the same minimum stiffness.

\begin{strip}
\centering
\includegraphics[width=\textwidth]{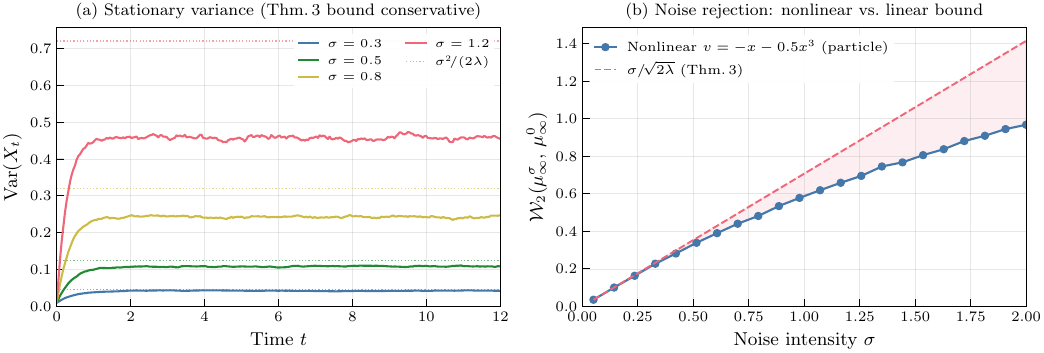}
\captionof{figure}{Stochastic robustness for
$dX=(-X-0.5X^3)\dt+\sigma\dW$, $\lambda=1$.
(a)~$\mathrm{Var}(X_t)$ (solid) versus $\sigma^2/(2\lambda)$
(dotted): stationary values lie strictly below the linear bound.
(b)~$W_2(\pi_\sigma,\delta_{x^\star})$ (blue) versus
$\sigma/\!\sqrt{2\lambda}$ (red dashed): the nonlinear drift rejects
noise more effectively.}
\label{fig:stochastic}
\end{strip}

\section{Conclusion}\label{sec:conclusion}

We have extended the distributional stability theory of contracting
flows beyond the affine regime.  The coupling-propagated effective
rate (Theorem~\ref{thm:eff_method}) captures nonlinear contraction
missed by the classical bound; the self-correcting envelope
(Theorem~\ref{thm:selfcorr_method}) provides explicit peak-time and
decay formulae for Euler discretisation; and the nonlinear stochastic
bound (Theorem~\ref{thm:sde_method}) is provably strict for non-affine
drifts.  These results jointly support contraction-regularized
flow-matching~\cite{lipman2023flow}: Theorem~\ref{thm:selfcorr_method}
bounds inference discretisation error, Theorem~\ref{thm:sde_method}
bounds robustness to training noise, and
Theorem~\ref{thm:eff_method} predicts that practical quality exceeds
worst-case guaranties for nonlinear architectures.  Future work
includes extension to implicit schemes and incorporation of
finite-sample estimation error.

\bibliographystyle{IEEEtran}
\bibliography{ref}

\end{document}